
\documentclass[journal]{IEEEtran}
%
\usepackage{xcolor}
\usepackage{graphicx}
\usepackage{amsmath,amssymb,amsthm}
\usepackage[inline]{enumitem}
\usepackage{verbatim}

\newenvironment{myitemize}{\begin{list}{$\bullet$}
{\setlength{\topsep}{1mm}
\setlength{\itemsep}{0.25mm}
\setlength{\parsep}{0.25mm}
\setlength{\itemindent}{0mm}
\setlength{\partopsep}{0mm}
\setlength{\labelwidth}{15mm}
\setlength{\leftmargin}{4mm}}}{\end{list}}

\newtheorem{theorem}{Theorem}


%

%

%
\ifCLASSINFOpdf
\else
\fi
\hyphenation{op-tical net-works semi-conduc-tor}

\begin{document}
%
\title{Cross-Layer Design of Automotive Systems}

\author{\IEEEauthorblockN{Zhilu Wang, Hengyi Liang, Chao Huang and Qi Zhu}\\
	\IEEEauthorblockA{Department of Electrical and Computer Engineering, Northwestern University}
}

%



\maketitle

\begin{abstract}
With growing system complexity and closer cyber-physical interaction, there are increasingly stronger dependencies between different function and architecture layers in automotive systems. This paper first introduces several cross-layer approaches we developed in the past for holistically addressing multiple system layers in the design of individual vehicles and of connected vehicle applications; and then presents a new methodology based on the weakly-hard paradigm for leveraging the scheduling flexibility in architecture layer to improve the system performance at function layer. The results of these works demonstrate the importance and effectiveness of cross-layer design for automotive systems.  
\end{abstract}


%
\IEEEpeerreviewmaketitle

\section{Introduction}
%
%
%
%

With the rapid development of active safety and autonomous functions, modern automotive systems have become complex cyber-physical systems that involve close interactions between the cyber domain (i.e., automotive electronic systems) and the physical domain (i.e., mechanical components and surrounding physical environment). The design and validation of these systems span across multiple layers, as illustrated in Fig.~\ref{fig:automotivelayers}. The function layer defines various automotive system functionalities in sensing, control, computation, communication, etc., and captures their interaction with the physical environment. In this paper, as we consider connected vehicle applications, the function layer can be further divided into the vehicular network layer and the individual vehicle function layer. The architecture layer defines the platform on which the system functionalities are implemented. It could include multiple sub-layers such as the software layer and the hardware layer. It may also include the layers of mechanical and physical components (e.g., engines, brakes, wires), but those are beyond the scope of this paper. In the AUTOSAR (Automotive Open System Architecture) standard, the automotive software layer could be further divided into a layer of runnables and a layer of software tasks connected with signals, as shown in~\cite{2015_ICCPS_Deng}. 
\begin{figure}[htbp]
	\centering
	\includegraphics[width=1\linewidth]{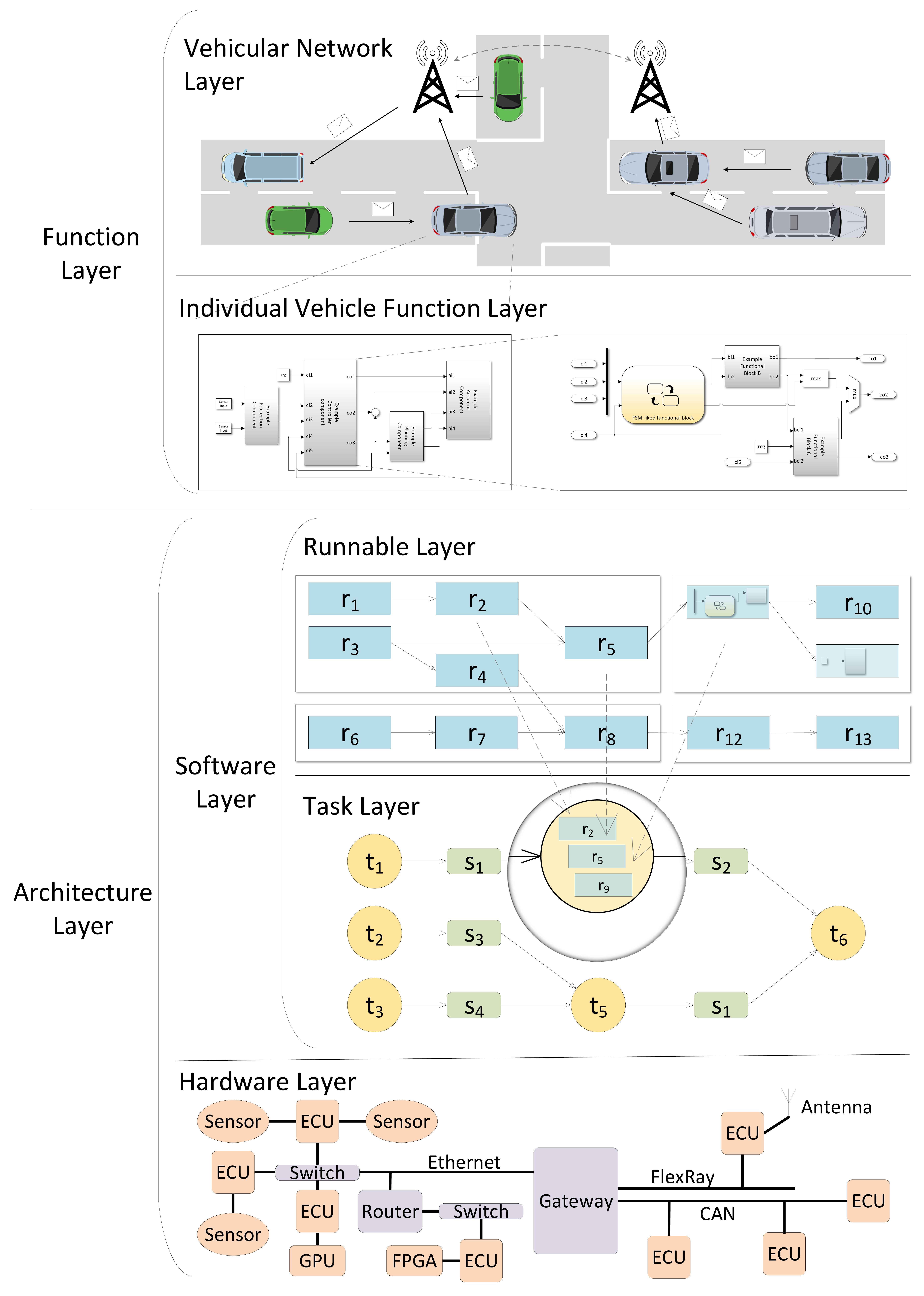}
	\caption{An illustration of the different layers for automotive system design, verification and validation. }
	\label{fig:automotivelayers}
\end{figure}

Traditionally, the design of different automotive layers is often carried out in an isolated fashion. However, the fast growing complexity of system functionality and architecture, as well as the close interaction between the cyber and physical domains, has led to strong dependency between layers and made those isolated approaches ineffective. For instance, whether an advanced control function or a new security feature can be deployed in a vehicle often depends on the availability of computation and communication resource at the architecture layer; whether a connected vehicle application can meet the safety and performance requirements depends on the communication delay and reliability of vehicular ad-hoc network. Thus, it is critical to adopt a cross-layer design methodology to holistically address the multiple layers in automotive systems.

In this paper, we will first introduce our previous works in cross-layer design for individual vehicles (Section~\ref{sec:individual_vehicles}) and for connected vehicles (Section~\ref{sec:connected_vehicles}). We will then present our initial results in developing a new cross-layer methodology for systems that allow the weakly-hard constraints (Section~\ref{sec:ongoing_works}).

\section{Cross-Layer Design for Individual Vehicles}
\label{sec:individual_vehicles}

Vehicle design spans across multiple layers. As shown in Fig.~\ref{fig:automotivelayers}, various automotive functionalities (e.g., sensing and control algorithms) can be captured at the function layer with formal or semi-formal models. These models are implemented at the architecture layer, often as software tasks running on the hardware platform. Traditionally, the functionality design, the generation of tasks from function models (which may go through a runnable layer as in AUTOSAR), and the mapping of tasks onto the hardware platform, are often done in isolated steps. However, as automotive systems are time-critical and resource-limited, design choices at the higher layers (e.g., functionality design or runnable generation) have significant impact on whether efficient or even feasible designs can be found at the lower layers. This has motivated our work in cross-layer design for individual vehicles, as introduced below. 

\subsection{Holistic Software Synthesis from Function to Architecture}

In~\cite{2015_ICCPS_Deng}, we propose a model-based software synthesis flow for AUTOSAR-based automotive systems. The cross-layer flow conducts runnable generation from the function model, task generation from the runnables, and task mapping onto a multicore platform in a holistic framework. Different from traditional approaches where runnable generation is performed merely from functional perspective and isolated from task generation and mapping, our approach explicitly addresses architectural properties in runnable generation, in particular regarding timing and schedulability.

In runnable generation, the functional blocks are mapped to runnables, as shown in Fig.~\ref{fig:automotivelayers}. Two algorithms are proposed in~\cite{2015_ICCPS_Deng} to explore different runnable generation options, while considering system modularity\footnote{As in~\cite{2015_ICCPS_Deng}, modularity is a metric that reflects the IP disclosure degree, and is measured by the number of runnables generated. A runnable generation can effectively hide the information of the internal block structure if the number of runnables (and their dependencies) is significantly smaller than the number of internal blocks. The optimal granularity is achieved when there is fewest number of generated runnables (when under certain constraints such as reusability and/or schedulability).}, reusability, code size, and schedulability: In a top-down method, a mixed integer linear programming (MILP) formulation is used to create the initial solution with the maximum modularity and reusability (i.e., no false input-output dependencies), and then a heuristic is used to decompose the runnables to improve schedulability. In a bottom-up method, another MILP formulation generates the initial solution with the maximum schedulability and a heuristic gradually merges runnables to optimize modularity.
To facilitate schedulability analysis, a formalism called Firing Execution Time Automaton (FETA) is developed, which can accurately capture the worst-case runnable timing behavior. In task generation and mapping, two algorithms are developed to group runnables into software tasks and map tasks onto Electronic Control Units (ECUs) on the hardware platform, while considering schedulability and memory cost for inter-task communication. For schedulability analysis, FETA is also applied at the task level to capture task timing behavior.

The experimental results in~\cite{2015_ICCPS_Deng} demonstrate that it is important to address timing and schedulability during the generation of runnables from function models, as the decisions at this stage already have significant impact on the eventual system feasibility. In particular, it is shown that there are strong trade-offs between modularity and schedulability. Previous methods that do not consider schedulability often lead to runnable generation solutions that have optimal modularity but are infeasible for task generation and mapping. Moreover, the proposed FETA formalism provides a holistic timing representation for functional blocks, runnables, and tasks, and is shown to be effective for schedulability analysis across these different layers.   

\subsection{Cross-Layer Design for Automotive Security}

Security has become a pressing issue for automotive systems in recent years, especially with the increase of vehicle connectivity and complexity. Various security protection mechanisms, such as message authentication, encryption, and anomaly detection, have been proposed for automotive systems. However, the successful deployment of these techniques depends on the available resources and whether the additional overhead may lead to timing violations of existing functions. 
It is thus important to take a cross-layer approach to address the design of security features together with other system objectives, including architecture layer properties such as timing and schedulability.

\subsubsection{Security-Aware Software Synthesis}

Traditional automotive software synthesis flow does not address security. 
It is often difficult to add security mechanisms after the software synthesis process is completed (i.e., after task allocation and scheduling are decided), because of the tight timing and resource constraints.
On the other hand, fixing the design of security mechanisms before software synthesis could often result in infeasible systems. Thus, we propose to address the security (function layer) together with the software synthesis (architecture layer) in an integrated formulation. 
In~\cite{2013_ICCAD_Lin, 2014_ICCAD_Lin}, we explore security mechanisms to protect communication messages against replay and masquerade attacks, which could happen when a malicious attacker compromises an ECU and then either replays legitimate messages on the communication bus or sends messages pretending as another ECU. Adopting message authentication codes (MACs) may prevent such attacks by authenticating each message with a key that only the message sender and receiver have, however they also incur overhead and could lead to timing and resource violations.

In~\cite{2013_ICCAD_Lin}, we consider adding MACs to Controller Area Networks (CAN) bus messages. Longer MACs make it harder for brute-force attacks, but also increase CAN message sizes and could lead to infeasible designs. 
To address these trade-offs, we develop an MILP-based algorithm to quantitatively explore the security design, including the messages to authenticate, MAC lengths and sharing strategies, together with the software synthesis options. Experimental results demonstrate that such holistic consideration can significantly increase the chance to find designs that satisfy both security and schedulability constraints -- although in some cases feasible solutions still cannot be found, given the limited bandwidth and message size of CAN.

In~\cite{2014_ICCAD_Lin}, we explore adding MACs to future automotive bus protocols that are based on TDMA communication (e.g., FlexRay, Time-Triggered Protocol, Time-Triggered Ethernet), which provide much higher bandwidths and larger message sizes. However, applying security mechanisms still requires careful analysis and design to avoid violations on other design constraints. In this work, we leverage the time-division property and adopt a key sharing mechanism that is based on time-delayed release of keys. 
This mechanism protects against masquerade attacks, however may lead to long message transmission delays. We quantitatively model the worst-case transmission delays under time-delayed release of keys, and develop a simulated annealing based method to holistically optimize task allocation and scheduling, TDMA-based network scheduling, and the key-release interval to meet both timing and security constraints.

\subsubsection{Security-Performance Trade-offs under Architectural Constraints}

With limited resources, improving automotive system security may require sacrificing other objectives, and such trade-off should be addressed in a quantitative and holistic manner. In~\cite{2016_TCAD_Zheng}, we explore the trade-off between security and control performance for a CAN-based system, while meeting schedulability constraints.

We consider a system model where multiple control tasks share the same ECU and communicate with sensors and actuators. Malicious attackers may eavesdrop on the messages between sensors and control tasks, and try to reconstruct the system state. This not only results in a loss of privacy, but could further be used as the basis for other attacks. Applying encryption techniques may prevent such attacks, however also introduces computation and communication overhead. For each encrypted message, a decryption task needs to be added on the ECU, which may force the control tasks to increase their activation periods to maintain schedulable. 

In~\cite{2016_TCAD_Zheng}, we present a cross-layer formulation to address system security level, control performance, and schedulability. The security level is the difficulty for attackers to restore the system states, measured through either Observability Gramian or analysis based on Kalman filter. We quantitatively model how the security level depends on the number of encrypted sensing channels (messages). On the other hand, for each control task, we model the relation between its performance and its period as an exponentially decaying function. Intuitively, having more encrypted messages increases system security level, but may lead to the increase of control task periods and thus worse control performance. 
We then develop a simulated annealing based algorithm to explore the choices of message encryption and control periods, under schedulability constraints. The experimental results demonstrate the clear trade-off between security and control performance, and show the importance of holistically considering these cross-layer properties.      

\section{Cross-Layer Design for Connected Vehicles}
\label{sec:connected_vehicles}

\begin{figure*}[htbp]
    \centering
    \includegraphics[width=\textwidth]{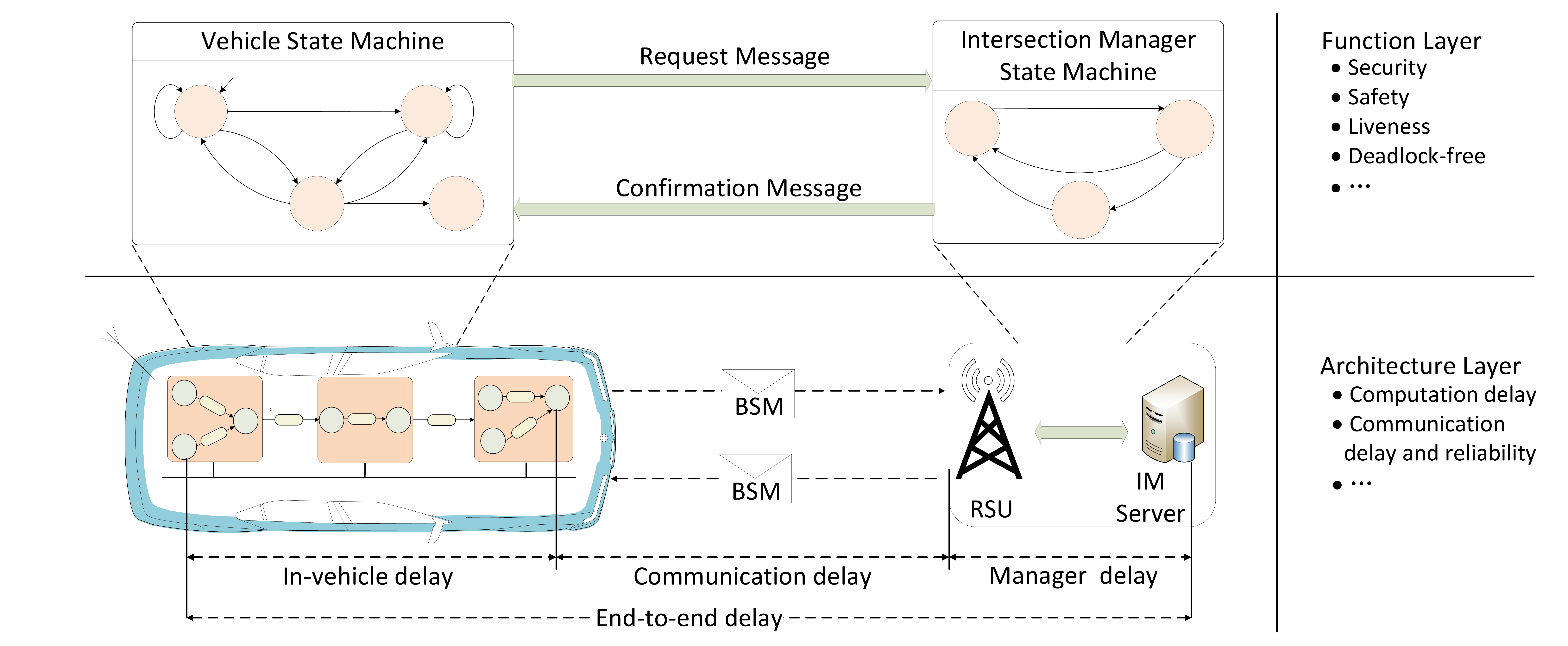}
    \caption{Cross-layer design for autonomous intersections.}
    \label{fig:cl_cv}
\end{figure*}

For many of the connected vehicle applications, the dependencies between different system layers need to be carefully considered during the design stage. In these applications, individual vehicles sense the external environment with on-board sensors, exchange information with nearby vehicles and infrastructures via wireless channels, analyze large amounts of data, and then conduct planning and control at real-time. The behavior of these applications at the function layer highly depends on the architecture layer properties such as the timing delay and reliability of wireless vehicle-to-vehicle (V2V) and vehicle-to-infrastructure (V2I) communications. 

To ensure the overall system correctness and performance, it is important to address function and architecture design in a holistic framework. 
Current practice, however, often addresses the two in isolation and overlooks the close dependency between function and architecture. To address this issue, we propose a framework in~\cite{2015_DAC_Zheng} to collaboratively conduct the verification of functionality and the synthesis of architecture platform. In particular, the framework includes two major aspects: \begin{enumerate*}[label={\alph*)},font={\color{red!50!black}\bfseries}]
\item function verification is carried out based on the assumptions of architecture layer properties (e.g., computation and communication delays are bounded within certain range), and 
\item architecture layer synthesis is performed while ensuring the assumptions specified in the function verification process are satisfied as constraints.
\end{enumerate*}
The interface between verification and synthesis is defined as a contract to formally capture the platform assumptions/constraints and other invariants. By systematically exploring the constraint settings in the interface and using them to drive function verification and architecture synthesis, our method can effectively reduce design and verification complexity, and identify designs that meet both functional and architectural requirements. 

In~\cite{2015_DAC_Zheng}, we perform a case study of collaborative adaptive cruise control (CACC), where leading and following vehicles exchange information such as velocity and acceleration through V2V messages to maintain a safe distance. At the function layer, a safety requirement is defined to ensure that the following vehicle can stop in time to avoid collision even if the leading vehicle brakes unexpectedly with its maximum braking power. 
Whether this safety requirement can be satisfied significantly depends on the end-to-end communication delay between the vehicles at the architecture layer.
Using our collaborative verification and synthesis approach, we can derive the constraint on inter-vehicle distance and end-to-end communication delay, which may be further refined to timing constraints on lower-layer architectural properties such as task activation periods and message transmission delays. 

We then apply the cross-layer methodology to the design and verification of centralized autonomous intersections in~\cite{bowen2017delay}. 
In this application, vehicles approaching an intersection exchange information with an intersection manager via V2I messages, and the intersection manager decides the order for vehicles to pass the intersection. It is essential to ensure that the system satisfies the following requirements at function layer while considering the V2I communication delay and message losses at architecture layer:
\begin{myitemize}
    \item \textit{Safety}: Any two vehicles with conflicting routes should not enter the intersection at the same time.
    \item \textit{Liveness}: A vehicle will eventually pass the intersection.
    \item \textit{Deadlock-free}: The intersection should not incur a deadlock where no vehicle can proceed.
\end{myitemize}
To guarantee these properties, we develop a delay-aware intersection management protocol, where end-to-end communication delay between vehicles and the intersection manager are explicitly modeled and the protocol is captured as timed automata for vehicles and the manager (as shown in Fig.~\ref{fig:cl_cv}). The protocol is proved to 1) always satisfy the safety requirement regardless of the delays, and 2) satisfy the liveness and deadline-free requirements if the end-to-end delay is bounded and the bound is known. 
Furthermore, through traffic simulations, it is shown that our autonomous intersection design can significantly outperform the tradition intersections~\cite{bowen2017delay}, as long as the maximum end-to-end delay is within a reasonable range (around 500ms in our study).
We believe that such quantitative results can help designers to set proper constraints for lower-layer architecture design.

\section{Cross-Layer Design with Weakly-Hard Paradigm}
\label{sec:ongoing_works}

Traditionally, the timing behavior of automotive functions has been specified based on hard constraints, where every instance of a task (or message) has to complete its execution (or transmission) by a pre-defined deadline. This is also the assumption in our previous works introduced in Section~\ref{sec:individual_vehicles} and~\ref{sec:connected_vehicles}. While such timing model facilitates worst-case analysis of system behavior, it is often over-pessimistic and rigid, resulting in infeasible or over-conservative designs. 

Many practical functions can in fact tolerate certain degrees of deadline misses, and their timing behavior can be described with the so-called \emph{weakly-hard} constraints, where bounded deadline misses are allowed. A common example is the $(m,\text{K})$ constraints, which specify that among any K consecutive instances of a task, at most $m$ of them can violate their execution deadlines~\cite{Bernat_TC_01}. Leveraging such weakly-hard constraints could more accurately define system timing requirements, significantly increase feasible design space under the typically-tight resource constraints in automotive systems, and improve design flexibility with additional timing slacks.

We believe that cross-layer design is particularly important for weakly-hard systems. To properly set the weakly-hard constraints (e.g., choose the values of $m$ and K) and effectively leverage their potential, it is essential to address the following two issues in a holistic manner: 1) at the function layer, ensure that system safety, stability, security, and other functional requirements can still be met under deadline misses allowed by weakly-hard constraints; 2) at the architecture layer, explore the design space under weakly-hard constraints to satisfy various system adaptation and retrofitting goals.  

In our recent work~\cite{huang2019formal}, we consider the first issue, and develop an approach for analyzing system functional properties under given degree of deadline misses. This work is different from works that focus on feedback controller synthesis for stability, such as~\cite{linsenmayer2017stabilization}. 
More specifically, our approach can determine whether a system is safe from an initial state under given $(m,\text{K})$ weakly-hard constraints. Previous verification methods could not be directly applied due to the lack of mechanism to capture and model the $(m,\text{K})$ specification at architecture level. To address this problem, our approach first carries out a series of transformations to abstract the $(m,\text{K})$ constraints, and then uses over-approximation based techniques to verify the safety of a new system model that combines the functional model and the abstraction. This approach is shown to be sound and effective in verifying system safety under weakly-hard constraints.
In another of our recent work~\cite{2019_ICCD_Liang}, we consider both issues, and develop a codesign approach to explore the addition of new security monitoring tasks by leveraging weakly-hard constraints for control tasks. The work studies the trade-off between control performance and system security level, when different degrees of deadline misses occur to the control tasks.

Next, we will introduce a novel cross-layer design approach for optimizing control sampling periods under weakly-hard constraints, to illustrate the potential of weakly-hard paradigm.

\medskip
\noindent
\textbf{Period Optimization with Weakly-hard Constraints:} There have been studies in the literature that explore control task periods across functional and architecture layers under the traditional hard timing constraints~\cite{cha2016control,Zhang2008RTSS}.
In this work, with the consideration of weakly-hard constraints, we can significantly expand the design space of control sampling periods for improving system feasibility and performance.

At function layer, reducing the sampling periods could typically lead to better control stability and performance, \emph{if} each control task instance can complete by its deadline. However, at architecture layer, shorter sampling periods also lead to higher resource utilization and may indeed cause deadline misses on some control tasks, which are detrimental to control stability and performance. It is thus important to study the cumulative effect of smaller periods and potential deadline misses under weakly-hard constraints (and vice versa) in exploring the design space. 

In below, we present a new cross-layer design approach for setting the sampling periods of control tasks while considering its impact at both function and architecture layers. We focus on theoretical control stability analysis at the function layer and task schedulability at the architecture layer. We also consider control performance via simulation in this work. The performance metric may be defined differently based on the control applications. In our example below, we measure the system control performance on its distance to the equilibrium. In work such as~\cite{2019_ICCD_Liang}, it is defined differently as the minimum time to reject a disturbance in the worst case (i.e., minimum time to bring the system back to equilibrium). 

\subsection{System model}
We consider a set of tasks $\{\tau_i\}$ running on a single ECU. All tasks are periodically activated. Each task $\tau_i$ is modeled by its period $T_i$, deadline $D_i$, and the execution time $C_i$. The system is scheduled by the static-priority preemptive policy. 

We consider a controller task $\tau_c$. The continuous-time dynamic of this linear time-invariant (LTI) system is:
\begin{equation}\label{equ:continuous}
    \dot{\textbf{x}}(t)=A_c\textbf{x}(t) + B_c\textbf{u}(t)
\end{equation}
where $\textbf{x}(t)\in\mathbb{R}^{n}$ and $\textbf{u}(t)\in\mathbb{R}^{m}$.

We assume that $\tau_c$ is running under the Logical Execution Time (LET) paradigm~\cite{2003_IEEE_Henzinger} where it receives the system state from sensors at the beginning of each sampling period and applies the control input to the actuators at the deadline. If a deadline miss occurs, the controller will apply the last calculated control input (from previous periods) at the deadline. For simplification, we assume that the deadline of this controller is the same as its period, i.e. $D_c=T_c$. And the discrete-time system dynamic is:
\begin{equation}
    \textbf{x}[k+1]=A\textbf{x}[k] + B\textbf{u}[k-p_k] \quad p_k=1,2,3,\ldots
\end{equation}
where
\begin{equation*}
    A=e^{A_c T_c}, \quad B = \int_0^{T_c} e^{A_c t} B_c dt
\end{equation*}
Here $\textbf{u}[k-p_k]$ is the latest control input at time $t=T_ck$, and $p_k$ is the related delay factor. For instance, $p_k=1$ if the deadline at $t=T_ck$ is not missed.

The control law is derived by solving the discrete-time linear–quadratic regulator (LQR) problem. Assume such control law is designed without considering any deadline misses.

The LET paradigm eases the control design as there will be a constant sensing-actuating delay.
By introducing the augmented state vector $\textbf{z}[k]=\left[\textbf{x}^\top[k], \textbf{u}^\top[k-1]\right]^\top$, the system dynamic used for solving the LQR is:
\begin{equation}
    \textbf{z}[k+1]=
    \begin{bmatrix}
     A & B\\
     \textbf{0} & \textbf{0}
    \end{bmatrix}
    \textbf{z}[k] + 
    \begin{bmatrix}
     \textbf{0}\\
     \textbf{I}
    \end{bmatrix}
    \textbf{u}[k]
    =A_z\textbf{z}[k] + B_z\textbf{u}[k]
\end{equation}
The control law $\textbf{u}[k]=-F\textbf{z}[k]$ is derived by minimizing the quadratic cost function:
\begin{equation}\label{equ:lqr_z}
    J=\sum_{k=1}^\infty \left(\textbf{z}^\top[k]Q\textbf{z}[k]+\textbf{u}^\top[k]R\textbf{u}[k] \right)
\end{equation}
where $Q$ and $R$ are both positive semi-definite matrices.

\subsection{Control stability under deadline misses}

Let $p_k$ denote the control input delay at time step $k$. Based on task schedulability analysis, we can deduce that $p_k$ is bounded by a maximum delay $\hat{p}=\left\lceil\frac{R_c}{T_c}\right\rceil$, where $R_c$ represents the task worst-case response time.
Then, we consider a new augmented state vector: $$\xi[k]=\left[\textbf{x}^\top[k],\textbf{u}^\top[k-1],\ldots,\textbf{u}^\top[k-\hat{p}]\right]^\top$$
we can have the system dynamic as:
\begin{equation}
    \xi[k+1] = A_\xi[k] \xi[k] + B_\xi u[k]
\end{equation}
\begin{equation}
    A_\xi[k]=
    \begin{bmatrix}
        A & B_1 & \ldots & B_{\hat{p}-1} & B_{\hat{p}}\\
        \textbf{0} & \textbf{0} & \ldots & \textbf{0} & \textbf{0}\\
        \textbf{0} & \textbf{I} & \ldots & \textbf{0} & \textbf{0}\\
         &  & \ddots &  & \\
         \textbf{0} & \textbf{0} & \ldots & \textbf{I} & \textbf{0}
    \end{bmatrix}
    ,\ 
    B_\xi=
    \begin{bmatrix}
        \textbf{0}\\
        \textbf{I}\\
        \textbf{0}\\
         \vdots \\
         \textbf{0}
    \end{bmatrix}
\end{equation}
where $B_{p_k}=B$ and $B_i=\textbf{0},\forall i\neq {p_k}$. As the control law $\textbf{u}[k]=-F\textbf{z}[k]$ is derived from the LQR problem~\eqref{equ:lqr_z}, we can rewrite the system dynamic as:
\begin{equation}\label{equ:wh_aug_dynamic}
    \xi[k+1] = (A_\xi[k] - B_\xi F_\xi)\xi[k]=\phi[k]\xi[k]
\end{equation}
where $F_\xi = \left[F,\textbf{0}\right],\ \textbf{0}\in \mathbb{R}^{(\hat{p}-1)m}$. 
As $B_\xi$ and $F_\xi$ are constant matrices and $A_\xi[k]$ is only related to $p_k$, there are $\hat{p}$ different $\phi[k]$, denoted as $\phi_1,\ldots,\phi_{\hat{p}}$. And we have $\phi[k]=\phi_{p_k}$.

The hyper-period $H_c$ of the controller task $\tau_c$ is the least common multiple of the periods of $\tau_c$ and its higher priority tasks. The execution pattern in each hyper-period is the same. As introduced in~\cite{2019_ICCD_Liang}, a weakly-hard schedulability analysis approach based on event-based simulation can derive the deadline miss pattern of the task in its hyper-period.
Moreover, during the simulation, the latest finished job can be recorded at each deadline, i.e. the delay factor $p_k$ will be recorded at $t=T_c*k$. Based on the consistency property of the hyper-period, we have $p_{k+iN_c} = p_k, \forall i\in \mathbb{Z}^+$, where $N_c = {H_c/T_c}$ is the number of jobs of $\tau_c$ in each hyper-period.

From the weakly-hard schedulability analysis, the delay factors $p_k$ in the hyper-period $k\in[0,N_c)$ are known. Thus, we have:
\begin{equation}
    \begin{aligned}
    \xi[k+N_c] &= \phi[k+N_c-1]\cdots\phi[k+1]\phi[k]\xi[k] \\
    &= \prod^0_{i=k+N_c-1}\phi[i] \prod^k_{j=N_c-1}\phi[j]\xi[k]\\
    &= \prod^0_{i=k+N_c-1}\phi_{p_i} \prod^k_{j=N_c-1}\phi_{p_j}\xi[k]\\
    &= \Phi_k\xi[k]
    \end{aligned}
\end{equation}

The following Theorem~\ref{thm:1} can then be used to check the control stability with deadline misses.

\begin{theorem}\label{thm:1}
The weakly-hard LTI system \eqref{equ:wh_aug_dynamic} is asymptotic stable if all eigenvalues of $\Phi_k$ are within the unit circle for all $k$:
\begin{equation}
    |\lambda^i_k| < 1,\ \forall \lambda^i_k\in eig(\Phi_k),\ \forall k\in[0,N_c)
\end{equation}
\end{theorem}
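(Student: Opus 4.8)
The plan is to exploit the fact that, thanks to the hyper-period consistency property $p_{k+iN_c}=p_k$, the closed-loop system \eqref{equ:wh_aug_dynamic} is a discrete-time \emph{periodic} linear system of period $N_c$, and to reduce its stability to the time-invariant case through the monodromy matrices $\Phi_k$ (a discrete Floquet argument). First I would observe that periodicity of the delay pattern forces periodicity of the one-hyper-period transition matrix itself: since $\phi[i+N_c]=\phi_{p_{i+N_c}}=\phi_{p_i}=\phi[i]$, the ordered product defining $\Phi_{k+N_c}$ over $[k+N_c,\,k+2N_c)$ coincides factor-by-factor with the product defining $\Phi_k$ over $[k,\,k+N_c)$, so $\Phi_{k+N_c}=\Phi_k$. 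Iterating the relation $\xi[k+N_c]=\Phi_k\xi[k]$ then gives, for every fixed phase $k\in[0,N_c)$ and every $j\in\mathbb{Z}^+$,
\begin{equation*}
    \xi[k+jN_c]=\Phi_k^{\,j}\,\xi[k].
\end{equation*}

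Next I would invoke the standard spectral characterization of convergent matrix powers: a square matrix $M$ satisfies $M^{\,j}\to 0$ as $j\to\infty$ if and only if its spectral radius is strictly below one (seen at once from the Jordan form, where each block $J(\lambda)^{\,j}$ vanishes exactly when $|\lambda|<1$). Under the hypothesis $|\lambda_k^i|<1$ for all $\lambda_k^i\in eig(\Phi_k)$, this yields $\Phi_k^{\,j}\to 0$, hence $\xi[k+jN_c]\to 0$ along each of the $N_c$ phase-subsequences. To lift convergence of these finitely many subsequences to convergence of the whole trajectory, I would write each intermediate state as $\xi[k+jN_c+\ell]=\bigl(\phi_{p_{k+\ell-1}}\cdots\phi_{p_k}\bigr)\xi[k+jN_c]$ for $0\le\ell<N_c$; since the bracketed factor is a fixed product of the finitely many constant matrices $\phi_1,\dots,\phi_{\hat{p}}$, it is a bounded operator, so $\xi[k+jN_c]\to 0$ drives every intermediate state to zero as well. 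Interleaving the $N_c$ convergent subsequences then shows $\xi[n]\to 0$ as $n\to\infty$ for an arbitrary initial state, which is exactly the claimed asymptotic stability (indeed the decay is geometric, since a common spectral radius bounds every $\Phi_k$).

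I do not anticipate a deep analytical obstacle here, as the content is essentially discrete Floquet theory; the points needing care are bookkeeping rather than hard estimation. The first is making $\Phi_{k+N_c}=\Phi_k$ precise given the reversed-order products in the preceding display, so that the clean power form $\Phi_k^{\,j}$ is rigorously justified rather than merely plausible. The second concerns invertibility: the augmented matrices $\phi_{p_k}$ are typically singular, because the shifting last block-rows of $A_\xi[k]$ discard the oldest control input, so the $\Phi_k$ at different phases need not be genuinely similar via $\phi_{p_k}\Phi_k\phi_{p_k}^{-1}$. This is harmless for the eigenvalue hypothesis, however, since $\Phi_{k+1}$ and $\Phi_k$ have the form $AB$ and $BA$ for square $A,B$ of equal size and therefore share the same characteristic polynomial; I would note this to show that the condition ``for all $k$'' is self-consistent and in fact equivalent to testing a single $\Phi_k$, rather than an over-strong requirement.
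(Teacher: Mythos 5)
Your proposal is correct and follows essentially the same route as the paper's proof: both reduce stability to the $N_c$ phase-subsequences $\xi[k+lN_c]=\Phi_k^l\,\xi[k]$, apply the spectral-radius criterion for convergent matrix powers to each $\Phi_k$, and then combine the finitely many convergent subsequences to conclude $\xi[n]\to 0$. Your extra bookkeeping (explicitly deriving $\Phi_{k+N_c}=\Phi_k$ from the hyper-period consistency $p_{k+iN_c}=p_k$, and noting via the $AB$-versus-$BA$ spectrum identity that the hypothesis for all $k$ reduces to checking a single $k$) only makes rigorous the steps the paper leaves implicit.
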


\begin{proof}
As all eigenvalues of $\Phi_k$ are in the unit circle, the sub-series $\xi'[l] = \xi[k+lN_c] = \Phi_k^l\xi[k],\ l=0,1,2,3,\dots$ is asymptotic stable, which can be expressed as:
\begin{equation} \label{equ:k_stable}
    \forall \varepsilon,\ \exists L_k(\varepsilon,\xi[k]),\ s.t.\  ||\xi[k+l N_c]|| \leq \varepsilon,\ \forall l \geq L_k
\end{equation}
As~\eqref{equ:k_stable} is satisfied by all $k\in [0,N_c)$, we have:
\begin{equation}
    \forall \varepsilon,\ \exists L(\varepsilon),\ s.t.\  ||\xi[k]|| \leq \varepsilon,\ \forall k \geq L
\end{equation}
where $L(\varepsilon) = \max\{k+N_c L_k(\varepsilon,\xi[k])|\forall k\in [0,N_c) \}$. Thus, the weakly-hard LTI system \eqref{equ:wh_aug_dynamic} is asymptotic stable.
\end{proof}

\subsection{Experiments}

In our experiments, we use a Furuta inverted pendulum as the example control plant to analyze the weakly-hard control functionality. The modeling of the Furuta pendulum is introduced in~\cite{2018_ECRTS_Pazzaglia}. 
The motor of the pendulum controls an arm that rotates in the horizontal plane. A pendulum is jointed to the arm and is free to rotate in the vertical plane. The system state is $\textbf{x}(t)=[\theta_r, \theta_p, \dot{\theta}_r, \dot{\theta}_p]^\top$, which are the angles of the arm and the pendulum, and their angular velocities. The control input $u(t)$ is the voltage applied to the motor. 
The continuous-time dynamic~\eqref{equ:continuous} of this Furuta pendulum in the numerical form is:
\begin{equation}
    A_c = 
    \begin{bmatrix}
     0 & 0 & 1 & 0 \\
     0 & 0 & 0 & 1 \\
     0 & 1.6907 & -2.9968 & -0.0048\\
     0 & 21.9176 & -3.0831 & 0.0626
    \end{bmatrix}
    , \ 
    B_c = \begin{bmatrix}
     0 \\
     0 \\
     3.8998 \\
     4.0122
    \end{bmatrix}
\end{equation}

Besides this pendulum control task, there are 7 regular periodic tasks that share the ECU. The periods of these tasks are varied from 60$ms$ to 300$ms$. 
The total utilization of these 7 regular tasks is $72\%$, where the utilization is $\sum C_i/T_i$.
The execution time of the controller task $\tau_c$ is $C_c=15\ ms$. We assume that all regular tasks do not allow any deadline misses, and they are scheduled with the rate-monotonic policy. 
The priority of $\tau_c$ is chosen to be the highest allowed one such that no low-priority regular tasks have deadline misses.

Under each period $T_c$, the control law $u[k]=-F\textbf{z}[k]$ is designed by solving the LQR problem \eqref{equ:lqr_z}, where $Q=diag(\alpha,\alpha,0,0,0)$ and $R = \beta$. $\alpha\in[0.1,10]$ and $\beta\in[0.1,1000]$ are the weights for states and control input in the quadratic cost. As the ratio $\alpha/\beta$ decreases, the control input intensity will reduce, while the convergence rate will be slower. We assume that the control input (i.e. the voltage apply to the motor) has an upper bound: $||F|| \leq 35$. During the control law design, it will find the highest $\alpha/\beta$ ratio that satisfies the $||F|| \leq 35$ constraint.

\begin{figure}[htbp]
	\centering
	\includegraphics[width=1\linewidth]{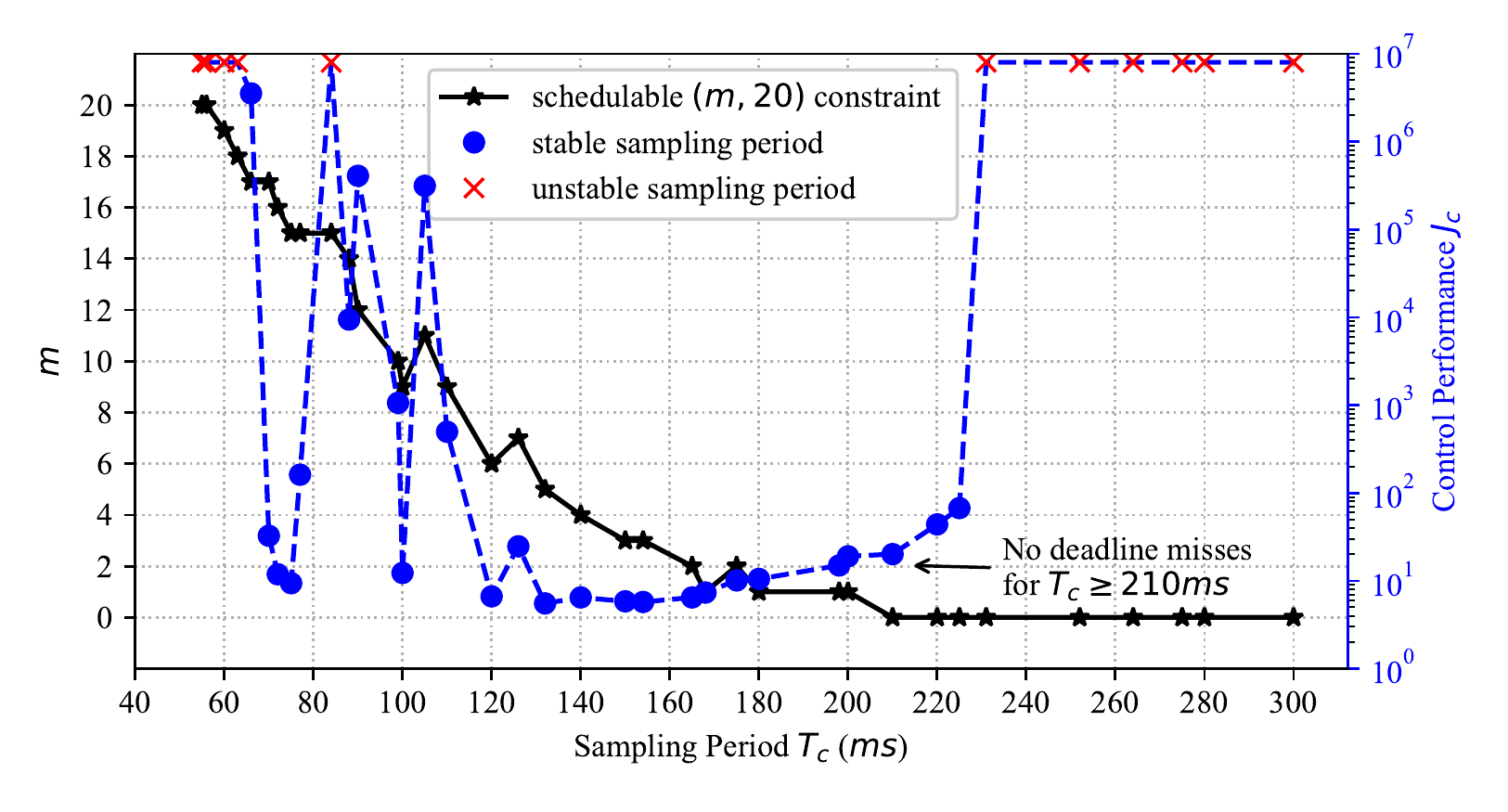}
	\caption{The black line (with star points) shows the smallest $m$ that makes the weakly-hard constraint $(m, \text{K})$ schedulable for $\text{K}=20$, under different sampling periods. The blue line (with round points and red crosses) shows the control stability and control performance $J_c$ for corresponding sampling period. The control input constraint is $||F|| \leq 35$. The results show that considering weakly-hard constraints can significantly expand the feasible design space and improve performance.}
	\label{fig:controlPerf_WHconst}
\end{figure}

We evaluate the system schedulability and control functionality for the sampling period $T_c$ from 55$ms$ to 300$ms$.
Besides control stability, we also evaluate the performance of the controller under different sampling period via simulation. Specifically, we generate 100 random initial states with random arrival time $t_0$ to represent disturbance and evaluate the cost among different sampling period. The cost $J_c$ is defined as the integral of the distance to the equilibrium:
\begin{equation}
    J_c = \int_{t_0}^{t_0 + \Delta t} \theta_r^2(t) + \theta_p^2(t) dt
\end{equation}
The control performance of each sampling period is the average cost among these $100$ cases.

Fig.~\ref{fig:controlPerf_WHconst} shows the smallest $m$ that makes the weakly-hard constraint $(m, \text{K})$ schedulable for each $T_c$ and the corresponding control stability and performance. We can see that the controller is stable for period $T_c\in[90,225]$. There is no deadline miss for period larger than $210ms$, i.e, the feasible design space under traditional hard deadlines is only $[210,225]$. With weakly-hard constraints, the space is expanded to at least $[90,225]$, with many period choices between $[60,90]$ feasible as well.
For sampling period from about $130ms$ to $170ms$, the control performance is close to the best performance, while the feasible weakly-hard constraint varies from $(5,20)$ to $(1,20)$. The best performance is achieved when $T_c$ is at $132ms$, with weakly-hard constraint $(5, 20)$. When the sampling period gets larger than $225ms$, the performance deteriorates and eventually control becomes unstable. When the sampling periods gets smaller than $120ms$, there are more deadline misses and the control becomes worse as well.

We also evaluate this system with different control input constraints, e.g., from $||F|| \leq 25$ to $\leq 50$. The results demonstrate that a lower input constraint will lead to shorter sampling period for better performance. For instance, if $||F|| \leq 30$, the system is unstable when $T_c\geq 200\ ms$ (i.e., when no deadline misses). It is stable and reaches the optimal performance when $T_c\in [120,140]$, even though some deadlines are missed.

This case study shows that leveraging weakly-hard constraints can expand the feasible design space of controller sampling periods (from [210,225] to at least [90, 225]) and achieve better performance (best performance achieved at $132ms$).
Moreover, to the focus of this paper, the results show that it is critical to address weakly-hard systems with a cross-layer approach that considers both function and architecture layers. It confirms a major motivation for using weakly-hard constraints, i.e., to leverage the robustness at function layer (e.g., the robustness of control functions with respect to occasional deadline misses) for expanding the design/adaptation flexibility at architecture layer (e.g., the flexibility to explore more sampling periods or add more security monitoring tasks). 

\section{Conclusion}

This paper presents several cross-layer methods for the design of automotive systems, including our prior works on systems with hard deadlines and our new results on weakly-hard systems. We believe that the strong dependencies between different function and architecture layers (even more so in weakly-hard systems) make it critical to take a cross-layer approach for addressing the design of automotive systems, and similar methodology might be applicable to other cyber-physical systems such as airplanes, robots, and various Internet-of-Things systems.


%



\section*{Acknowledgment}
The authors gratefully acknowledge the support from the National Science Foundation grants 1834701, 1834324, and 1839511, and the Office of Naval Research grant N00014-19-1-2496.


\bibliographystyle{IEEEtran}
\bibliography{bib/IEEEabrv,bib/crosslayer}

\begin{thebibliography}{10}
\providecommand{\url}[1]{#1}
\csname url@samestyle\endcsname
\providecommand{\newblock}{\relax}
\providecommand{\bibinfo}[2]{#2}
\providecommand{\BIBentrySTDinterwordspacing}{\spaceskip=0pt\relax}
\providecommand{\BIBentryALTinterwordstretchfactor}{4}
\providecommand{\BIBentryALTinterwordspacing}{\spaceskip=\fontdimen2\font plus
\BIBentryALTinterwordstretchfactor\fontdimen3\font minus
  \fontdimen4\font\relax}
\providecommand{\BIBforeignlanguage}[2]{{%
\expandafter\ifx\csname l@#1\endcsname\relax
\typeout{** WARNING: IEEEtran.bst: No hyphenation pattern has been}%
\typeout{** loaded for the language `#1'. Using the pattern for}%
\typeout{** the default language instead.}%
\else
\language=\csname l@#1\endcsname
\fi
#2}}
\providecommand{\BIBdecl}{\relax}
\BIBdecl

\bibitem{2015_ICCPS_Deng}
\BIBentryALTinterwordspacing
P.~Deng, F.~Cremona, Q.~Zhu, M.~Di~Natale, and H.~Zeng, ``A model-based
  synthesis flow for automotive {CPS},'' in \emph{Proceedings of the ACM/IEEE
  Sixth International Conference on Cyber-Physical Systems}, ser. ICCPS
  '15.\hskip 1em plus 0.5em minus 0.4em\relax New York, NY, USA: ACM, 2015, pp.
  198--207. [Online]. Available:
  \url{http://doi.acm.org/10.1145/2735960.2735972}
\BIBentrySTDinterwordspacing

\bibitem{2013_ICCAD_Lin}
C.~Lin, Q.~Zhu, C.~Phung, and A.~Sangiovanni-Vincentelli, ``Security-aware
  mapping for {CAN}-based real-time distributed automotive systems,'' in
  \emph{2013 IEEE/ACM International Conference on Computer-Aided Design
  (ICCAD)}, Nov 2013, pp. 115--121.

\bibitem{2014_ICCAD_Lin}
C.~Lin, Q.~Zhu, and A.~Sangiovanni-Vincentelli, ``Security-aware mapping for
  {TDMA}-based real-time distributed systems,'' in \emph{2014 IEEE/ACM
  International Conference on Computer-Aided Design (ICCAD)}, Nov 2014, pp.
  24--31.

\bibitem{2016_TCAD_Zheng}
B.~Zheng, P.~Deng, R.~Anguluri, Q.~Zhu, and F.~Pasqualetti, ``Cross-layer
  codesign for secure cyber-physical systems,'' \emph{IEEE Transactions on
  Computer-Aided Design of Integrated Circuits and Systems}, vol.~35, no.~5,
  pp. 699--711, May 2016.

\bibitem{2015_DAC_Zheng}
\BIBentryALTinterwordspacing
B.~Zheng, W.~Li, P.~Deng, L.~G{\'e}rard, Q.~Zhu, and N.~Shankar, ``Design and
  verification for transportation system security,'' in \emph{Proceedings of
  the 52Nd Annual Design Automation Conference}, ser. DAC '15.\hskip 1em plus
  0.5em minus 0.4em\relax New York, NY, USA: ACM, 2015, pp. 96:1--96:6.
  [Online]. Available: \url{http://doi.acm.org/10.1145/2744769.2747920}
\BIBentrySTDinterwordspacing

\bibitem{bowen2017delay}
B.~Zheng, C.~Lin, H.~Liang, S.~Shiraishi, W.~Li, and Q.~Zhu, ``Delay-aware
  design, analysis and verification of intelligent intersection management,''
  in \emph{2017 IEEE International Conference on Smart Computing (SMARTCOMP)},
  May 2017, pp. 1--8.

\bibitem{Bernat_TC_01}
G.~Bernat, A.~Burns, and A.~Liamosi, ``Weakly hard real-time systems,''
  \emph{IEEE Transactions on Computers}, vol.~50, no.~4, pp. 308--321, Apr
  2001.

\bibitem{huang2019formal}
C.~Huang, W.~Li, and Q.~Zhu, ``Formal verification of weakly-hard systems,'' in
  \emph{International Conference on Hybrid Systems: Computation and Control
  (HSCC)}.\hskip 1em plus 0.5em minus 0.4em\relax ACM, 2019.

\bibitem{linsenmayer2017stabilization}
S.~Linsenmayer and F.~Allgower, ``Stabilization of networked control systems
  with weakly hard real-time dropout description,'' in \emph{2017 IEEE 56th
  Annual Conference on Decision and Control (CDC)}.\hskip 1em plus 0.5em minus
  0.4em\relax IEEE, 2017, pp. 4765--4770.

\bibitem{2019_ICCD_Liang}
H.~Liang, Z.~Wang, D.~Roy, S.~Chakraborty, and Q.~Zhu, ``Security-driven
  codesign with weakly-hard constraints for real-time embedded systems,'' in
  \emph{37th IEEE International Conference on Computer Design (ICCD'19)}, in
  press.

\bibitem{cha2016control}
H.-J. Cha, W.-H. Jeong, and J.-C. Kim, ``Control-scheduling codesign exploiting
  trade-off between task periods and deadlines,'' \emph{Mobile Information
  Systems}, vol. 2016, 2016.

\bibitem{Zhang2008RTSS}
F.~{Zhang}, K.~{Szwaykowska}, W.~{Wolf}, and V.~{Mooney}, ``Task scheduling for
  control oriented requirements for cyber-physical systems,'' in \emph{2008
  Real-Time Systems Symposium}, 2008, pp. 47--56.

\bibitem{2003_IEEE_Henzinger}
T.~A. {Henzinger}, B.~{Horowitz}, and C.~M. {Kirsch}, ``Giotto: a
  time-triggered language for embedded programming,'' \emph{Proceedings of the
  IEEE}, vol.~91, no.~1, pp. 84--99, Jan 2003.

\bibitem{2018_ECRTS_Pazzaglia}
P.~Pazzaglia, L.~Pannocchi, A.~Biondi, and M.~D. Natale, ``Beyond the weakly
  hard model: Measuring the performance cost of deadline misses,'' in
  \emph{30th Euromicro Conference on Real-Time Systems (ECRTS 2018)}.\hskip 1em
  plus 0.5em minus 0.4em\relax Schloss Dagstuhl--Leibniz-Zentrum fuer
  Informatik, 2018.

\end{thebibliography}
%



%


\begin{IEEEbiographynophoto}{Zhilu Wang}
received the B.S. degree in Applied Physics from University of Science and Technology of China, Hefei, China, in 2016. He is currently pursuing the Ph.D. degree at the Department of Electrical and Computer Engineering, Northwestern University, Evanston, IL, USA. His research interests include cyber-physical systems and software synthesis for real-time systems. 
He is a student member of the IEEE.
\end{IEEEbiographynophoto}


\begin{IEEEbiographynophoto}{Hengyi Liang}
received the B.S. degree in Microelectronics  from University of Electronic Science and Technology of China, Chengdu, China, in 2015. He is currently pursuing the Ph.D. degree with the Department of Electrical and Computer Engineering, Northwestern University, Evanston, IL, USA. His research interests include computer-aided design, software synthesis and intelligent transportation system. He is a student member of the IEEE.
\end{IEEEbiographynophoto}

\begin{IEEEbiographynophoto}{Chao Huang}
received a Ph.D. degree in Computer Science from Nanjing University, Nanjing, China, in 2018. He is currently a post-doctoral fellow with the Department of Electrical and Computer Engineering, Northwestern University, Evanston, IL, USA. His research interests include design and verification of model-based/data-driven autonomous systems.
\end{IEEEbiographynophoto}

\begin{IEEEbiographynophoto}{Qi Zhu} is an Associate Professor at the Department of Electrical and Computer Engineering, Northwestern University, Evanston, IL, USA. He has a Ph.D. in Electrical Engineering and Computer Science from the University of
California Berkeley, Berkeley, CA, USA. His research interests include design automation for intelligent cyber-physical systems (CPS) and Internet-of-Things (IoT) applications, cyber-physical security, machine learning for CPS/IoT, energy-efficient CPS, and system-on-chip design. He is a member of the IEEE.
\end{IEEEbiographynophoto}




\end{document}